\newtheorem{theorem}{Theorem}
\newtheorem{corollary}{Corollary}
\newtheorem{problem}{Problem}
\newtheorem{conj}{Conjecture}
\newname\stableset{{\rm \textsc{independent-set}}}
\newname\clique{{\rm \textsc{clique}}}
\newname\mmatrix{{\rm \textsc{polytope-M-matrix}}}
\newname\posdet{{\rm \textsc{positive-determinant}}}
\newname\minradius{{\rm \textsc{polytope-min-radius}}}
\newname\maxradius{{\rm \textsc{polytope-max-radius}}}
\newname\polytopicinstability{{\rm \textsc{polytopic-instability}}}
\def\eq#1{(\ref{eq-#1})}
\newcommand{\simplex}{\Delta}
\newcommand{\B}{B_\pi}
\newcommand{\G}{\mathcal{G}}
\newcommand{\re}{{\mathbb R}}
\newcommand{\n}{{\mathbb N}}
\newcommand{\conv}{{{\rm Conv} }}
\title{Polytopic uncertainty for linear systems: \\ New and old complexity results\thanks{To appear in Systems \& Control Letters, 2014}}
\author{Nikos Vlassis$^1$, Rapha{\"e}l M.\ Jungers$^2$ \vspace*{5pt} \\
\small $^1$Luxembourg Centre for Systems Biomedicine, University of Luxembourg \\
\small $^2$ICTEAM institute, Universit{\'e} Catholique de Louvain}
\begin{document}
\maketitle

\begin{abstract}
We survey the problem of deciding the stability or stabilizability of uncertain linear systems whose region of uncertainty is a polytope.  This natural setting has applications in many fields of applied science, from Control Theory to Systems Engineering to Biology.
We focus on the algorithmic decidability of this property when one is given a particular polytope.  This setting gives rise to several different algorithmic questions, depending on the nature of time (discrete/continuous), the property asked (stability/stabilizability), or the type of uncertainty (fixed/switching).  Several of these questions have been answered in the literature in the last thirty years.  We point out the ones that have remained open, and we answer all of them, except one which we raise as an open question.  In all the cases, the results are negative in the sense that the questions are NP-hard.
As a byproduct, we obtain complexity results for several other matrix problems in Systems and Control.\\
{\bf Keywords:} linear system; polytopic uncertainty; stability; stabilizability; computational complexity; NP-hardness.
\end{abstract}

\section{Introduction}

Robust control is an important topic that has motivated several important research lines in Systems and Control since the eighties.  It has spanned a wide range of applications, and has benefited from many different techniques in applied mathematics, such as algorithmic complexity, convex optimization, game theory, $\mu$-analysis, and others (see, e.g., \cite{Doyle11,iglesias2010control}).

One of the simplest settings in robust control is a discrete-time (resp. continuous-time) linear system describing the evolution of a state space vector $x\in \re^n$ as follows:
\begin{align}
x(t+1) &= A(t) \, x(t), \label{eq-discrete} \\
\dot x (t) &= A(t) \, x(t), \label{eq-continuous}
\end{align}
where the matrix $A(t)$ is restricted to belong to a given polytope{\rmjn{\footnote{\rmjn{Note that polytopes are one of the simplest representations of compact sets, and thus the negative results presented in this paper are generalizable to more complex compact sets.}}} $P\subset \re^{n\times n}$. Depending on the context, several different questions may be relevant: In some situations, the matrix is fixed for the whole trajectory, but its actual value is not determined, except for the fact that it belongs to the polytope~$P$. We refer to this case as the \emph{fixed uncertainty} case.  In other situations the matrix $A(t)$ is allowed to change from time to time; the trajectory of the system is determined by a \emph{switching signal} $\sigma:$
\begin{equation}
\sigma:\n\rightarrow P :\quad t\rightarrow A(t),
\end{equation} 
or, in the continuous-time case,
\begin{equation}
\sigma:\re_+\rightarrow P :\quad t\rightarrow A(t).
\end{equation} 
We refer to this case as the \emph{switching uncertainty} case.
Typical questions that one would like to answer are the following:

\begin{problem}{\tt[Stabilizability]}
Given a set of matrices $\{A_1,\dots,A_m\}$ describing a polytope $$P=\conv{\{A_1,\dots,A_m\}},$$ does there exist a matrix $A\in P$ (resp.\ a switching signal $\sigma$) such that the trajectory converges to zero for any initial condition?
\end{problem}
\begin{problem}{\tt[Stability]}
Given a set of matrices $\{A_1,\dots,A_m\}$ describing a polytope $$P=\conv{\{A_1,\dots,A_m\}},$$ does the trajectory converge to zero for every possible matrix $A\in P$ (resp.\ every switching signal $\sigma$) and every initial condition?
\end{problem}

Thus, the above setting raises eight different algorithmic questions, depending on the discrete/continuous nature of time, the stability/stabilizability question, and the fixed/switching nature of uncertainty.
These questions are typical of \emph{Robust Control,} where one approximatively knows the dynamical system, and wants to ensure that it is stable, up to a certain perturbation of the model.  Many situations in practical applications boil down to one of these cases  (see, e.g.,  \cite{Lin09}). 

A case of particular interest is the analysis of {\em Biological systems} \cite{doyle2008robust,he2005global,iglesias2010control}. There, robustness can be needed because the system is a linearization of a real nonlinear system, and one must then take into account the discrepancy between the model and the real-life situation, or it can be required because the system is not perfectly known. Robustness also arises naturally in neural networks, for uncertainty reasons, or because of the switching nature of their dynamics.
The stabilizability problem also turns out to be relevant in situations where one can control a system by activating a certain mode, among a few ones that are available, in order to stabilize the system.  For instance, in \emph{virology}, it has been reported that the drug treatment for some viral disease like HIV could be improved by switching among several medications from time to time \cite{drugs}. Recently, researchers in the Control community have modeled this situation as a switching system, where the vector $x(t)$ represents the different concentrations of viral populations in the blood, and the switching signal corresponds to the choice of medication at every particular time \cite{hmcb10,hcm13}. Typically, in this application, one's goal is to design this switching signal so as to best control the population of viruses in the patient's body.  This situation falls into the scope of the present paper, as it can be modeled as a \emph{continuous-time stabilizability problem with switching uncertainty}. 
Other applications can be found in fields as diverse as \emph{wireless control networks \cite{alurdjpw,ddj12}},  \emph{fluid dynamics} \cite{barbu1998h}, or even \emph{medicine} \cite{GriEtal:2001:IFA_921}.

In the present paper we focus on the algorithmic problem of answering the above questions. It turns out that all the cases that are known are NP-hard.  However, it is important to mention that practical algorithms have been developed for some of these problems, which try to circumvent these negative results, and which provide practitioners with workable methods that often allow to reach satisfactory answers in practice.  For instance, sufficient conditions have been proposed that are \rmjn{efficiently checkable (e.g. in polynomial time)} thanks to modern tools like convex optimization methods for proving stability, or instability, of diverse types of uncertain sets of dynamical systems.  We refer \rmjn{to \cite{protasov-jungers-blondel09,jungers_lncis,protasov-jungers-cdc13,daafouzbernussou,parrilo-jadbabaie,guglielmi2011fast,lee-dullerud06,GP11,margaliot-survey}} and references therein for various examples of such methods. While such methods are of course important because they allow in favorable situations to obtain a solution to the problem, the negative results we present in this paper are also very  valuable, as they allow to understand theoretical barriers that no algorithm could overcome (unless, of course, $P=NP$). Such negative results can also assist in the development of new methods, as they help to understand what type of performance one can hope for in any candidate new method.  

In Table \ref{table-recap} we summarize all known results in the literature (including the ones we derive in this paper).  One can see that all the cases are now known to be NP-hard, except for the stabilizability of continuous-time systems with switching uncertainty.  In Section \ref{sec-scs}, we state a conjecture which (if answered positively) would solve the problem.
\begin{figure}
\begin{tabular}{|l|c|c|c|c|}
   \hline
      & \multicolumn{2}{c|}{Stabilizability} & \multicolumn{2}{c|}{Stability} \\
   \hline
    & Switched & Fixed &Switched & Fixed  \\
   \hline
   Continuous-time & ? & This paper& \cite{gurvits2009np} & \cite{gurvits2009np}\\
   \hline
   Discrete-time & \cite{blondel-mortal}\footnotemark & This paper & \cite{tsitsiklis97lyapunov} & This paper \\
   \hline
\end{tabular}\caption{Summary of the known results.}  \label{table-recap}
\end{figure}
\footnotetext{We note that in \cite{blondel-mortal}, a slightly different problem is analyzed, in which the matrix $A(t)$ is restricted to be a vertex of the polytope $P.$ \rmjn{The problem with arbitrary switching in the entire polytope certainly makes sense for this particular question too, but we are not aware of any analysis of this latter problem in the literature.}}

\section{Discrete-time systems with fixed uncertainty}

In this section we analyze the stabilizability and stability problems for discrete-time systems with fixed uncertainty. These translate to the problems of asking, for a given matrix polytope, whether there exists a matrix in the polytope with spectral radius smaller than one, or larger than one, respectively. Both problems turn out to be NP-hard.

\subsection{Stabilizability}

\begin{problem}[\minradius]
Given a set of real matrices, is there a convex combination of those whose spectral radius is smaller than one? 
\end{problem}

\begin{theorem}
The \minradius problem is NP-hard.
\label{minr}
\end{theorem}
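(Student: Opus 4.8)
The plan is to reduce from a known NP-hard problem — the natural candidate is \stableset{} (or equivalently \clique{}), since the paper defines these names and they are the workhorses for hardness of matrix-spectral-radius questions. Given a graph $G=(\V,\E)$ on $n$ vertices and an integer $k$, I want to build a finite set of matrices $\{A_1,\dots,A_m\}$ (the generators of the polytope $P$) such that $P$ contains a matrix of spectral radius strictly less than $1$ if and only if $G$ has an independent set of size $k$ (equivalently, some parameter crosses a threshold). The key algebraic fact to exploit is that for a nonnegative matrix $M$, the Perron--Frobenius eigenvalue controls the spectral radius, and for a symmetric nonnegative matrix the spectral radius equals $\max_{x} x^\top M x / x^\top x$; combinatorial quantities like the independence number of $G$ show up precisely in such Rayleigh-quotient optimizations (Motzkin--Straus type identities relate $1 - 1/\omega(G)$ to a quadratic form over the simplex).

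Concretely, I would first encode the freedom of choosing a convex combination $A=\sum_i \lambda_i A_i$ as the freedom of choosing a point $\lambda$ in the simplex $\simplex^{m-1}$, and arrange the generators $A_i$ so that $A(\lambda)$ is (a scaled copy of) a matrix whose Perron root is a known function of $\lambda$ — e.g.\ $A(\lambda)$ could be built so that $\rho(A(\lambda))$ equals, up to normalization, the value of a quadratic form $\lambda^\top Q \lambda$ where $Q$ is essentially the adjacency matrix of $G$ (or its complement). One clean way: take $A_i = c\,(\1 e_i^\top + e_i \1^\top_{\text{restricted to } N(i)})$-type rank-one/rank-two pieces, or, following the standard trick, let each $A_i$ be a rank-one matrix $u_i v_i^\top$ chosen so that $A(\lambda)$ has a single nonzero eigenvalue equal to $v^\top(\lambda) u(\lambda)$, a bilinear form in $\lambda$ that reproduces the Motzkin--Straus quadratic. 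Then "$\exists A\in P$ with $\rho(A)<1$" becomes "$\min_{\lambda\in\simplex} \lambda^\top Q\lambda < \tau$" for an explicit threshold $\tau=\tau(n,k)$, which by the Motzkin--Straus theorem is equivalent to $\omega(G)\ge k$ (or $\alpha(G)\ge k$ after complementation).

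The steps, in order: (1) fix the source problem (\stableset/\clique) and recall the exact Motzkin--Straus identity; (2) write down the generator matrices $A_i$ explicitly as a function of $G$, verifying the construction is polynomial-size; (3) compute $\rho(A(\lambda))$ in closed form — this is where the rank-one (or Perron--Frobenius) structure must be invoked to turn an eigenvalue into an elementary function of $\lambda$; (4) pick the threshold and the global scaling so that $\rho<1$ is exactly equivalent to the quadratic form dropping below $\tau$; (5) conclude via Motzkin--Straus, being careful about the \emph{strict} inequality (spectral radius \emph{smaller than} one) — possibly perturbing the scaling by a tiny rational amount, or arguing that the minimum over the compact simplex is attained so strictness is harmless. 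The main obstacle I anticipate is step (3)--(4): ensuring that the convex-combination matrix has its spectral radius given \emph{cleanly} by the target quadratic form (no spurious large eigenvalues from the interaction of the $A_i$'s, and the right behaviour for non-vertex $\lambda$), and simultaneously handling the strict-vs-nonstrict inequality so that the reduction is genuinely an "if and only if". A secondary subtlety is that the matrices are allowed to be arbitrary real (not necessarily nonnegative) matrices, so if I use a non-Perron construction I must separately control all eigenvalues, e.g.\ by making $A(\lambda)$ symmetric so that its spectral radius is literally the largest $|{\cdot}|$ of a Rayleigh quotient.
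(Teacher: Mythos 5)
Your overall strategy matches the paper's: reduce from \stableset{} (or \clique), invoke the Motzkin--Straus identity, and engineer the generators so that the spectral radius of the convex combination is a monotone function of a quadratic form in the convex weights over the simplex. Your worry about strictness is also resolvable exactly as you suspect: the minimum of $y^\top (I+C)\,y$ over the compact simplex is attained and equals $1/\alpha(\G)$, so ``there exists $\pi$ with the form $<\frac{1}{j-1}$'' is equivalent to $\alpha(\G)>j-1$, and integrality of $\alpha(\G)$ turns this into $\alpha(\G)\geq j$; no perturbation of the scaling is needed.

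However, the step you yourself single out as the main obstacle --- your steps (3)--(4), getting $\rho(A(\lambda))$ to equal the target quadratic form cleanly --- is precisely where the proposal has a genuine gap, and the one concrete device you float does not work as stated. A convex combination $\sum_i\lambda_i u_iv_i^\top$ of rank-one matrices is not rank one; if you force it to be (say all $u_i=u$), then $A(\lambda)=u\,v(\lambda)^\top$ has unique nonzero eigenvalue $v(\lambda)^\top u$, which is \emph{linear} in $\lambda$, so the Motzkin--Straus quadratic cannot appear this way. Likewise, making $A(\lambda)$ symmetric with $\rho$ a Rayleigh quotient does not by itself produce a quadratic form \emph{in $\lambda$}. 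The paper's fix is a bordered block construction: each generator is the $(n+1)\times(n+1)$ matrix $A_i=\bigl[\begin{smallmatrix}\emptyset & e_i+c_i\\ e_i^\top & r\end{smallmatrix}\bigr]$ with $r=1-\frac{1}{j-1}$, so that \emph{both} off-diagonal blocks of $\B=\sum_i\pi_iA_i$ are linear in $\pi$ (the column block is $(I+C)\pi$, the row block is $\pi^\top$) while the $n\times n$ corner stays zero. The characteristic equation then degenerates to $\lambda^{n-1}\bigl(\lambda^2-r\lambda-\pi^\top(I+C)\pi\bigr)=0$: the quadratic form arises as the \emph{product} of the two linear blocks, there are no spurious eigenvalues, and $\rho(\B)<1$ is equivalent to $\pi^\top(I+C)\pi<1-r=\frac{1}{j-1}$, after which Motzkin--Straus finishes the argument. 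This bilinear-bordering trick is the missing ingredient; without it (or an equivalent device) the reduction is not complete.
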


\begin{proof}
We establish a polynomial-time reduction from the \stableset problem. This problem asks, for a given undirected graph $\G=(V,E)$ and a positive integer $j \leq |V|$, whether $\G$ contains an independent set $V'$ (a set of pairwise non-adjacent vertices) with size $|V'| \geq j$. This problem is NP-complete~\cite{Garey79}. \rmjn{We assume $j \geq 2$ (otherwise the problem is trivial)}.

An instance of \minradius takes as input $k$ real $n \times n$ matrices $A_i$, with $i=1,\ldots,k$, and asks whether there exists a nonnegative vector $\pi=(\pi_1,\ldots,\pi_k)^\top$ with $\sum_{i=1}^k \pi_i = 1$, such that the matrix $\B = \sum_{i=1}^k \pi_i A_i$ has spectral radius less than one. We will prove NP-hardness of the problem for the special case in which $k=n$, that is, when the number of input matrices equals their dimension. For any input instance of \stableset, our reduction will construct (in a number of steps at most polynomial in the size of the problem input) an instance of \minradius, such that a polynomial-time algorithm for deciding the latter would imply a polynomial-time for deciding the former.

Let $(\G,j)$ be an instance of \stableset, and let $C$ be the $n \times n$ adjacency matrix of the graph $\G$.
The matrix $C$ is a symmetric zero-one matrix with zeros in the main diagonal. Let $c_i$ denote the $i$'th column of $C$, and let $e_i$ denote the length-$n$ vector with 1 in the $i$'th entry and all other entries zero. The reduction constructs $n$ nonnegative block matrices $A_i$, for $i=1,\ldots,n$, of size $(n+1) \times (n+1)$:
\begin{equation}
A_i = 
 \begin{bmatrix} 
  \emptyset & e_i+c_i \\
  e_i^\top  & \rmjn{r}
 \end{bmatrix},
 \label{eq-A1}
\end{equation} 
where \rmjn{$r=1-\frac{1}{j-1} \in [0,1)$} and $\emptyset$ denotes the $n \times n$ zero matrix. The matrix $\B = \sum_{i=1}^k \pi_i A_i$ then reads
 \begin{equation}
 \B = 
 \begin{bmatrix} 
  \emptyset & (I + C)\pi \\
  \pi^\top  & \rmjn{r}
 \end{bmatrix}.
 \label{eq-B}
\end{equation} 
The special block structure of $\B$ allows to analytically compute its eigenvalues by manipulating the system $\B v = \lambda v$, and it is easy to verify that the spectrum of $\B$ is given by 
\begin{equation}
	\sigma(\B)= \bigg\{ 0, \frac{\rmjn{r} \pm \sqrt{\rmjn{r}^2+4\pi^T(I + C)\pi}}{2}\bigg\}.
\end{equation}
It follows that the spectral radius condition $\max(|\rmjn{\sigma(\B)}|) < 1$ \rmj{is equivalent to the condition} \rmjn{$\pi^\top (I+C) \pi < 1-r$}. Moreover, for any graph $\G$ with adjacency matrix $C$, the following holds \cite{Motzkin65}:
\begin{equation}
  \frac{1}{\alpha(\G)} = \min_{y \in \simplex} \, y^\top (I + C) \, y \, ,
\label{eq-motzkin1}
\end{equation}
where $\alpha(\G)$ is the size of the maximum independent set of $\G$.
Hence, \rmj{there exists a vector $\pi \in \simplex$ that satisfies \rmjn{$\pi^\top (I+C) \pi < 1-r$} if and only if 
$\frac{1}{\alpha(\G)} < \frac{1}{\rmjn{j-1}}$, or $\alpha(\G) > \rmjn{j-1}$.  Thus, there exists an independent set $V' \subseteq V$ with size $|V'| \geq j$ if and only if our set of matrices is stable.} 
This completes the reduction.
\end{proof}

This result establishes that the problem of globally minimizing spectral radius over a matrix polytope is NP-hard (\rmjn{which was independently established recently using a different proof} \cite{Fercoq11}). The spectral radius of a nonsymmetric real matrix affine function such as $B_\pi$ in \eq{B} is generally a nonconvex function, and the problem of globally optimizing the spectral radius is a difficult one \cite{Overton88,Han99,Neumann07}. The above result provides evidence for this `difficulty'. We note that the problem of globally minimizing the spectral radius becomes tractable in the case of symmetric matrices \cite{Overton88,Fiedler92}, or in the case of irreducible nonnegative matrices whose entries are {\em posynomial} functions of a parameter vector $\pi$. In the latter case, minimizing spectral radius under posynomial constraints in $\pi$ can be solved in polynomial time via the Collatz-Wielandt formula \cite{Friedland81} and geometric programming \cite[p.\,165]{Boyd04}. (Note that in our \minradius problem the simplex constraint $\sum_{i=1}^n \pi_i = 1$ is not a posynomial constraint; it would be if the equality were replaced by the inequality `$\leq$'.)

\subsection{Stability}

\begin{problem}[\maxradius]
Given a set of real matrices, is there a convex combination of those whose spectral radius is larger than one? 
\end{problem}

\begin{theorem}
The \maxradius problem is NP-hard.
\label{maxr}
\end{theorem}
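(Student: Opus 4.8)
I would prove this by a polynomial-time reduction from \stableset, re-using the block-matrix gadget from the proof of Theorem~\ref{minr} with one essential change. In that proof the spectrum of $\B$ was controlled by the quadratic form $\pi^\top(I+C)\pi$, but that form is useless here: since $I+C$ is dominated entrywise by the all-ones matrix, $\pi^\top(I+C)\pi\le 1$ on the simplex, so a ``spectral radius larger than one'' condition of that shape would be either vacuous or unsatisfiable. The fix is to drop the $e_i$ term from the off-diagonal block and instead plug in the \emph{complement} adjacency matrix $\bar C=J-I-C$ ($J$ the all-ones matrix) — a symmetric zero-one matrix with zero diagonal, i.e.\ the adjacency matrix of the complement graph $\bar\G$ — whose quadratic form on the simplex is governed, via the Motzkin--Straus theorem~\cite{Motzkin65}, by the clique number of $\bar\G$, that is, by $\alpha(\G)$.

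Concretely, given an instance $(\G,j)$ of \stableset with $j\ge 2$, let $\bar c_i$ denote the $i$-th column of $\bar C$, set $s=\tfrac{1}{j-1}\in(0,1]$, and construct the $(n+1)\times(n+1)$ nonnegative matrices
\[
A_i=\begin{bmatrix}\emptyset & \bar c_i\\ e_i^\top & s\end{bmatrix},\qquad i=1,\dots,n,
\]
so that $\B=\sum_i\pi_i A_i$ has zero $n\times n$ diagonal block, off-diagonal blocks $\bar C\pi$ and $\pi^\top$, and bottom-right entry $s$. The same Schur-complement manipulation of $\B v=\lambda v$ used in Theorem~\ref{minr} gives $\det(\lambda I-\B)=\lambda^{n-1}(\lambda^2-s\lambda-\pi^\top\bar C\pi)$, hence $\sigma(\B)=\{0,\ \tfrac{1}{2}(s\pm\sqrt{s^2+4\,\pi^\top\bar C\pi})\}$. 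Since $\bar C\ge 0$ and $\pi\ge 0$, the radicand is at least $s^2$, the nonzero eigenvalues are real, and $\rho(\B)=\tfrac{1}{2}(s+\sqrt{s^2+4\,\pi^\top\bar C\pi})$. As $0<s<2$, a one-line rearrangement then yields the equivalence $\rho(\B)>1\iff\pi^\top\bar C\,\pi>1-s=1-\tfrac{1}{j-1}$.

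To finish, I would invoke the Motzkin--Straus identity $\max_{\pi\in\simplex}\pi^\top\bar C\,\pi=1-1/\omega(\bar\G)=1-1/\alpha(\G)$ \cite{Motzkin65}. This maximum is attained, and $\alpha(\G)$ is an integer, so there is a $\pi\in\simplex$ with $\pi^\top\bar C\,\pi>1-\tfrac{1}{j-1}$ if and only if $1-1/\alpha(\G)>1-1/(j-1)$, i.e.\ $\alpha(\G)>j-1$, i.e.\ $\G$ has an independent set of size at least $j$. Combining this with the previous paragraph, the constructed matrix set admits a convex combination of spectral radius larger than one exactly when $(\G,j)$ is a yes-instance of \stableset; since the construction is plainly carried out in time polynomial in the input size, this proves that \maxradius is NP-hard. (One could equivalently reduce from \clique, using the columns of $C$ itself in place of $\bar c_i$.)

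The main thing to get right is the arithmetic tying the scalar $s$ to the Motzkin--Straus value: choosing $s=1/(j-1)$ — rather than, say, $1/j$ — is exactly what lets the integrality of $\alpha(\G)$ convert the strict inequality $\rho(\B)>1$ into the clean threshold $\alpha(\G)\ge j$. The other pitfall is the one flagged above: a naive re-use of the Theorem~\ref{minr} gadget collapses, so the passage to the complement matrix (and the removal of the identity shift) is substantive, not cosmetic. Everything else — the eigenvalue formula, reality of the spectrum, and polynomiality of the reduction — is routine.
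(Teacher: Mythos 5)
Your argument is correct and is essentially the paper's own proof: the same block gadget (with the $e_i$ term dropped from the upper-right block), the same eigenvalue computation for $\B$, and the same use of the Motzkin--Straus theorem; the only structural difference is that the paper reduces directly from the \clique problem using the columns of $C$ itself --- the variant you mention in your closing parenthesis --- whereas you reduce from the \stableset problem via the complement matrix $\bar C = J-I-C$. The corner constants differ ($s=\tfrac{1}{j-1}$ for you, $r=\tfrac12+\tfrac{1}{2(j-1)}$ in the paper) only because you normalize Motzkin--Straus as $\max_{\pi\in\simplex}\pi^\top\bar C\,\pi = 1-1/\omega(\bar\G)$ while the paper uses the halved form \eq{motzkin}; your bookkeeping is internally consistent.
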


\begin{proof}
We reduce from the \clique problem. This problem asks, for a given undirected graph $\G=(V,E)$ and a positive integer $j \leq |V|$, whether $\G$ contains a clique $V'$ (a set of pairwise adjacent vertices) of size $j$ or more. This problem is NP-complete~\cite{Garey79}. 

Let $(\G,j)$ be an instance of \clique, \rmjn{with $j \geq 2$}. The reduction follows the pattern of the proof of Theorem \ref{minr}, with $C$, $c_i$, and $e_i$ as defined therein. We construct $n$ block matrices $A_i$, for $i=1,\ldots,n$:
 \begin{equation}
 A_i = 
 \begin{bmatrix} 
  \emptyset & c_i \\
  e_i^\top  & r 
 \end{bmatrix},
 \label{eq-A2}
\end{equation}
\rmjn{where $r=\frac{1}{2}+\frac{1}{2(j-1)}$}.
Their convex combination $\B = \sum_{i=1}^k \pi_i A_i$  reads
 \begin{equation}
 \B = 
 \begin{bmatrix} 
  \emptyset & C\pi \\
  \pi^\top  & r
 \end{bmatrix} .
 \label{eq-B2}
\end{equation}
If $\omega(\G)$ is the size of the largest clique of $\G$, the following holds \cite{Motzkin65}:
\begin{equation}
  \frac{1}{2}-\frac{1}{2\omega(\G)} = \max_{y \in \simplex} \, y^\top C \, y \, .
\label{eq-motzkin}
\end{equation}
Hence, in analogy with the proof of Theorem 1, the existence of a vector $\pi \in \simplex$ that satisfies 
the spectral radius condition  $\max(|\sigma(\B)|) > 1$ would imply $\pi^\top C \pi > 1-r$, and hence $\omega(\G)>\rmjn{j-1}$, which completes the reduction.
\end{proof}

\section{Continuous-time stabilizability with fixed uncertainty}

In this section we show that the problem of testing if a continuous-time system with fixed uncertainty is stabilizable is NP-hard. We prove this as a corollary of a more general result involving M-matrices, which in turn follows as a corollary of Theorem \ref{minr}. 
Recall that a Z-matrix is a real matrix with nonpositive off-diagonal entries, and a nonsingular M-matrix is a Z-matrix that is positive stable, that is, all its eigenvalues have positive real part \cite{Berman94}.  \rmjn{Note that, if $A$ is a nonsingular M-matrix, then $-A$ is a Metzler and Hurwitz matrix.  Such matrices have attracted some attention in recent years in the switching systems literature, because their algebraic properties make them easier to analyze than general matrices (see \cite{fainshil2009stability,valcher-positive,jungers-invariant}).}

\begin{problem}[The \mmatrix problem]
Given a set of real matrices, is there a convex combination of those that is a nonsingular M-matrix? 
\end{problem}

\begin{theorem}
The \mmatrix problem is NP-hard.
\label{np}
\end{theorem}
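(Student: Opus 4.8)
The plan is to obtain this as an immediate corollary of Theorem~\ref{minr}, using the classical spectral description of M-matrices (see \cite{Berman94}): every Z-matrix can be written as $sI - B$ with $B$ nonnegative, and $sI - B$ is a nonsingular M-matrix if and only if $s > \rho(B)$, where $\rho(\cdot)$ denotes the spectral radius. This converts the (nonconvex) M-matrix property into a plain spectral-radius bound, which is exactly the quantity controlled by \minradius.

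First I would observe that the reduction proving Theorem~\ref{minr} in fact outputs only \emph{nonnegative} matrices: in \eq{A1} the blocks $e_i$ and $c_i$ are zero-one vectors, and $r = 1 - \tfrac{1}{j-1} \ge 0$ since $j \ge 2$. Thus \minradius is already NP-hard when the input is promised to consist of nonnegative matrices, and it suffices to reduce from this restricted version. Given such an instance $\{A_1,\dots,A_k\}$ with each $A_i \ge 0$, construct $M_i := I - A_i$, with $I$ the identity of the same order as the $A_i$. Each $M_i$ is a Z-matrix, since its off-diagonal entries are $-(A_i)_{pq} \le 0$; and Z-matrices are closed under convex combinations, so every $\sum_i \pi_i M_i = I - \B$, with $\B = \sum_i \pi_i A_i \ge 0$, is again a Z-matrix.

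Applying the characterization with $s = 1$, the matrix $I - \B$ is a nonsingular M-matrix exactly when $\rho(\B) < 1$. Hence the \mmatrix instance $\{M_1,\dots,M_k\}$ admits a convex combination that is a nonsingular M-matrix if and only if the \minradius instance $\{A_1,\dots,A_k\}$ admits a convex combination of spectral radius less than one; since $A_i \mapsto I - A_i$ is computable in polynomial time, this is a valid polynomial-time reduction and NP-hardness follows. There is no real obstacle here: the only points needing a sentence of care are quoting the M-matrix equivalence so that the \emph{strict} inequality $\rho(\B) < 1$ corresponds to \emph{nonsingularity} of $I - \B$, and checking that the reduction of Theorem~\ref{minr} already produces nonnegative matrices — verified above, so that we never need \minradius NP-hardness in full generality.
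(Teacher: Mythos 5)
Your argument is correct and is essentially the paper's own proof: both form $I-A_i$ from the (nonnegative) matrices of the Theorem~\ref{minr} construction, note these are Z-matrices whose convex combination is $I-\B$, and invoke the standard characterization that $I-\B$ with $\B\ge 0$ is a nonsingular M-matrix if and only if $\rho(\B)<1$. Your explicit remark that the Theorem~\ref{minr} reduction already outputs nonnegative matrices (so $I-A_i$ is a Z-matrix) is a point the paper leaves implicit, but the route is the same.
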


\begin{proof}
We construct a matrix $M_\pi = I - \B$, where the matrix $\B$ is as defined in \eq{B} \rmjn{(with $r=1-\frac{1}{j-1}$)}. Note that $M_\pi$ is the convex combination of the Z-matrices
\begin{equation}
A_i = 
 \begin{bmatrix} 
  I & -(e_i+c_i) \\
  -e_i^\top  & \frac{1}{\rmjn{j-1}}
 \end{bmatrix}, \quad  i=1,\ldots,n.
 \label{eq-A3}
\end{equation} 
The result then follows from the fact that $M_\pi$ is a nonsingular M-matrix if and only if the spectral radius of $\B$ is less than one \cite[p.\,133, definition 1.2]{Berman94}.
\end{proof}

We note that there exist more than 50 equivalent conditions for a Z-matrix to be a nonsingular M-matrix \cite{Berman94}. Hence, Theorem \ref{np} establishes NP-hardness of all `polytopic' versions of them. For instance, using conditions D$_{16}$ and N$_{38}$ in \cite[p.\,135,137]{Berman94}, we directly establish NP-hardness of the problems of testing if there exists a convex combination of a given set of matrices, whose real eigenvalues are all positive (D$_{16}$) or whose inverse is a nonnegative matrix (N$_{38}$).

NP-hardness of testing the stabilizability of a continuous-time system with fixed uncertainty now follows as a simple corollary of Theorem \ref{np}.
\rmj{
\begin{corollary}\label{cor-fcstabilizability}
Stabilizability in continuous time with fixed uncertainty is NP-hard.  That is, given a polytope of matrices, it is NP-hard to decide whether there exists a Hurwitz matrix in the polytope.
\end{corollary}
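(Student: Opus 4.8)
The plan is to obtain the corollary directly from Theorem~\ref{np} by a sign flip. Recall that a continuous-time system $\dot x = A(t)x$ with $A(t)$ fixed at some $A\in P$ is stabilizable exactly when $P$ contains a Hurwitz matrix, so the statement to prove is precisely that deciding membership of a Hurwitz matrix in a given matrix polytope is NP-hard. The first step is to recall the elementary equivalence: a matrix $M$ is Hurwitz (all eigenvalues with strictly negative real part) if and only if $-M$ is positive stable (all eigenvalues with strictly positive real part).

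Next I would reuse the family of instances built in the proof of Theorem~\ref{np}. For an instance $(\G,j)$ of \stableset\ we constructed the Z-matrices $A_1,\dots,A_n$ of \eq{A3}, and, via the chain through Theorem~\ref{minr}, showed that $\conv\{A_1,\dots,A_n\}$ contains a nonsingular M-matrix if and only if $\G$ has an independent set of size at least $j$. The reduction for the corollary simply outputs the polytope $\conv\{-A_1,\dots,-A_n\}$, which is computable in polynomial time. Every element of this polytope has the form $-M_\pi$ with $M_\pi=\sum_{i=1}^n \pi_i A_i$; since each $A_i$ is a Z-matrix and the Z-matrix property is preserved under convex combinations, $M_\pi$ is again a Z-matrix.

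The key step is then the chain of equivalences $-M_\pi \text{ Hurwitz} \iff M_\pi \text{ positive stable} \iff M_\pi \text{ is a nonsingular M-matrix}$, where the last equivalence holds because a positive-stable Z-matrix is, by definition, a nonsingular M-matrix. Hence $\conv\{-A_1,\dots,-A_n\}$ contains a Hurwitz matrix if and only if $\conv\{A_1,\dots,A_n\}$ contains a nonsingular M-matrix, which by Theorem~\ref{np} is NP-hard to decide. This completes the reduction. I would also note the incidental sharpening: each $-A_i$ is a Metzler matrix (its off-diagonal entries are nonnegative), so the hardness persists even when the input polytope is promised to consist solely of Metzler matrices, a setting of interest in the positive/switching systems literature cited above.

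I do not expect a genuine obstacle here; the only point requiring a little care is that one must restrict attention to instances whose generators are Z-matrices, so that ``positive stable'' coincides with ``nonsingular M-matrix'' — but this is exactly the family used in Theorem~\ref{np}, and convexity keeps us inside the Z-matrices, so the equivalence is automatic and no extra argument is needed.
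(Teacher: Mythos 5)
Your proposal is correct and follows essentially the same route as the paper: negate the generators of the polytope from Theorem~\ref{np}, observe that convex combinations of Z-matrices are Z-matrices so that positive stability coincides with being a nonsingular M-matrix on this family, and conclude via the equivalence between $-M_\pi$ Hurwitz and $M_\pi$ positive stable. The only difference is cosmetic: you make explicit the Metzler observation about the negated generators, which the paper mentions only in the surrounding discussion.
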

\begin{proof}
 If $M_\pi$ is a nonsingular M-matrix, then $-M_\pi$ is Hurwitz. All the matrices in the polytope whose vertices are defined in \eq{A3} are $Z$-matrices, and thus, requiring the existence of a M-matrix is equivalent to require the existence of a matrix whose eigenvalues have positive real part.  It is thus sufficient to take the same matrices as in the construction of the proof of Theorem \ref{np} multiplied by $-1$ in order to finish the proof:
 \begin{equation}
A_i = 
 \begin{bmatrix} 
  -I & e_i+c_i \\
  e_i^\top  & -\frac{1}{\rmjn{j-1}} 
 \end{bmatrix}, \quad  i=1,\ldots,n.
 \label{eq-fcstabilizability}
\end{equation} 
\end{proof}
}

\section{Stabilizability of continuous-time systems with switching uncertainty}\label{sec-scs}

To the best of our knowledge, this problem has not been studied in the literature, and we have not been able to settle it either.  The decidability problem seems as hard as the other ones considered above (if not harder), and is most probably NP-hard.  The simplest way of proving it would be to reuse the constructions used for the proof of the fixed uncertainty continuous-time stabilizability problem.  Indeed, we conjecture that the sets of matrices constructed in \eq{fcstabilizability} are \emph{switching} stabilizable if and only if they are \emph{fixed uncertainty}-stabilizable.  It is known that such a situation does not always hold, i.e., that there are sets of matrices that are continuous time stabilizable with switched uncertainty, but such that, if the uncertainty is fixed, all the matrices in the set are unstable (see \cite{gurvits-mtns}).  Nevertheless, \rmj{after running some numerical experiments,} we conjecture that for the particular sets at stake in construction \eq{fcstabilizability}, fixed-uncertainty and switched uncertainty are the same, meaning that one cannot improve stability by switching from time to time among matrices in the set.  This would then imply that the problem is NP-hard, since it would be equivalent to decide fixed-uncertainty stabilizability on the matrices in \eq{fcstabilizability}, which we have shown to be NP-hard.  
\rmj{
\begin{conj}
The problem of deciding the stabilizability of a continuous-time system with switching uncertainty is NP-hard. Moreover, the construction in Corollary \ref{cor-fcstabilizability} is a valid reduction, that is, the set of matrices in \eq{fcstabilizability} are fixed-uncertainty stabilizable if and only if they are switched-uncertainty stabilizable.
\end{conj}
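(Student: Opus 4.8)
\medskip\noindent\textbf{Proof proposal.}
One direction is trivial --- a constant switching signal is a switching signal, so fixed--uncertainty stabilizability of the matrices in \eq{fcstabilizability} implies switched--uncertainty stabilizability --- so I would prove the converse by contraposition. By the reasoning behind Theorems~\ref{minr} and~\ref{np}, the generic convex combination of the matrices in \eq{fcstabilizability}, namely $-M_\pi=B_\pi-I$ with $B_\pi$ as in \eq{B}, is Hurwitz if and only if $\pi^\top(I+C)\pi<\tfrac{1}{j-1}$; hence, by the Motzkin--Straus identity \eq{motzkin1}, ``the polytope contains no Hurwitz matrix'' is exactly $\min_{\pi\in\simplex}\pi^\top(I+C)\pi=\tfrac{1}{\alpha(\G)}\ge\tfrac{1}{j-1}$, i.e.\ $\alpha(\G)\le j-1$. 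Writing the state as $z=(x,y)\in\re^n\times\re$, the relaxed dynamics of the switched system is $\dot z=-M_{\pi(t)}z$ for a measurable $\pi(\cdot):\re_+\to\simplex$, that is $\dot x=-x+(I+C)\pi(t)\,y$ and $\dot y=\pi(t)^\top x-\tfrac{1}{j-1}y$, and what I must show is: if $\alpha(\G)\le j-1$ then this differential inclusion admits an initial state from which no trajectory tends to $0$.

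The structural fact I would lean on is that \emph{every vertex matrix in \eq{fcstabilizability} is Metzler}: the off-diagonal blocks $e_i+c_i$, $e_i^\top$ and the off-diagonal part of $-I$ are nonnegative, so $-M_\pi$ is Metzler for every $\pi$ and the nonnegative orthant $\re^{n+1}_{\ge0}$ is invariant under every switching law. It therefore suffices to preclude convergence to $0$ from \emph{strictly positive} initial states, and a natural instrument is a \emph{copositive anti-Lyapunov certificate}: a continuous, positively homogeneous $V\ge0$ on the orthant, coercive ($V(z)\to0\Rightarrow z\to0$), that is nondecreasing along every trajectory of the inclusion; any such $V$ with $V(z_0)>0$ forbids $z(t)\to0$ from $z_0$ and hence certifies non-stabilizability. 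This places the conjecture inside the theory of positive (Metzler) switched systems (see \cite{valcher-positive,fainshil2009stability} and references therein), where equivalences between switched behaviour and convex-hull spectral data are known in special cases but not in general; as our family has $n$ modes, the particular structure must be exploited.

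Two natural certificates provably fail, which indicates what is needed. There is no common quadratic anti-Lyapunov function: if $Q\succ0$, the top-left $-I$ block together with the rank-one coupling forces $QA_i+A_i^\top Q$ to be negative definite on a subspace of codimension $2$, hence never positive semidefinite when $n\ge3$. There is also no common \emph{linear} copositive certificate: the inequality coming from the $x$-block forces the first $n$ entries of the weight vector to vanish, after which the diagonal entry $-\tfrac{1}{j-1}<0$ makes the certificate inequality fail; and a ``$\min$-type'' function $\min_k x_k/w_k$ with $w\gg0$ fails similarly, because $e_i+c_i$ vanishes at the vertices $k\ne i$ not adjacent to $i$. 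The certificate must therefore be genuinely nonlinear --- a piecewise-linear function, a ``$\min$-type'' function over a varying index set, or a copositive polynomial of higher degree attached to the Motzkin form $\pi^\top(I+C)\pi$ --- and I would build its data from an optimal Motzkin vector $\pi^\star$ realizing $\tfrac{1}{\alpha(\G)}$, supported on a maximum independent set. The plan would then be to show that the inequalities asserting ``$V$ is nondecreasing along $\dot z=A_iz$ for every $i$ and every $z\ge0$'' collapse precisely to $\tfrac{1}{\alpha(\G)}\ge\tfrac{1}{j-1}$.

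The main obstacle --- and the reason this remains a conjecture --- is the continuous-time gap between fixed and switched uncertainty: by \cite{gurvits-mtns}, fast switching among individually unstable matrices can in general yield a stable behaviour, so one must prove this cannot happen for \emph{these} rank-one-coupled Metzler families. Since $-M_\pi=B_\pi-I$ and the scalar shift $I$ commutes with everything, the conjecture is equivalent to the assertion that the continuous-time lower Lyapunov exponent of the \emph{nonnegative} family $\{B_{e_1},\dots,B_{e_n}\}$ equals its best constant-average rate $\min_{\pi\in\simplex}\rho(B_\pi)$, rather than being strictly smaller --- a ``no free lunch for positive systems'' statement that, according to the authors, is supported by numerical experiments. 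I expect this to be the genuinely hard point, to be reached either through the copositive certificate above or via an extremal-norm / reachable-set estimate for the inclusion, and I would not be surprised if ideas beyond the standard positive-switched-systems toolbox turn out to be necessary. (Note that the first assertion alone, NP-hardness of switched continuous-time stabilizability, might conceivably be obtained through a different reduction, but the conjecture as stated asks for this particular construction to work.)
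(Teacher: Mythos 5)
The statement you were asked to prove is a \emph{conjecture}: the paper gives no proof of it, explicitly states that the authors were unable to settle the question, and offers only numerical experiments as evidence. Your proposal does not prove it either, and you say so yourself, so the honest verdict is that there is a genuine gap --- but it is exactly the gap the paper itself leaves open, namely the claim that for the family in \eq{fcstabilizability} switching cannot stabilize when no fixed convex combination is Hurwitz. Everything you do establish is correct and consistent with the paper's reasoning: the ``fixed $\Rightarrow$ switched'' direction is indeed immediate (a constant signal is a switching signal); the equivalence ``$-M_\pi$ Hurwitz $\iff \pi^\top(I+C)\pi < \tfrac{1}{j-1} \iff \alpha(\G)\ge j$'' follows from the eigenvalue computation in Theorem \ref{minr}, the M-matrix characterization in Theorem \ref{np}, and the Motzkin--Straus identity \eq{motzkin1}; and your observation that every vertex matrix is Metzler, so the problem lives in the positive-switched-systems world, matches the remark the authors make before Theorem \ref{np} and their comment that extremal norms and lifting techniques do not seem to suffice.

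Two cautions on the parts of your sketch that go beyond the paper. First, your claims that no common quadratic and no common linear copositive anti-Lyapunov certificate exist are plausible but asserted without proof; if you intend to use them to motivate a nonlinear certificate built from an optimal Motzkin vector, they need to be verified, and in any case ruling out two certificate classes does not by itself produce the certificate you need. Second, your reformulation via the commuting shift --- that the conjecture amounts to the continuous-time minimal Lyapunov exponent of the nonnegative family $\{B_{e_1},\dots,B_{e_n}\}$ coinciding with $\min_{\pi\in\simplex}\rho(B_\pi)$ --- is a clean and correct way to state the open point, but it is a restatement, not a reduction in difficulty: as the paper notes via \cite{gurvits-mtns}, this equality fails for general families, so the argument must use the specific rank-one-coupled structure, and that is precisely the step neither you nor the authors supply.
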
}

\rmjn{Proving this conjecture would settle the last open question in Table \ref{table-recap}, establishing NP-hardness of the problem. However, it seems that classical tools from switching systems theory (like for instance, extremal norms or lifting techniques, see \cite{GP11,jungers-invariant,jungers_lncis}) are not sufficient to prove the conjecture, and new tools should be developed.  Moreover, the links with graph theory (independent sets and cliques) seem to make it a rich research topic, and we believe that it could lead to important further studies.}

\section{Conclusion}
In this paper we have surveyed the stability and stabilizability problem of (fixed or switched) uncertain linear systems from a computational complexity point of view.
We restricted ourselves to uncertainty sets described as polytopes.  Of course, one could think about any other kind of uncertainty.  In particular, we did not cover the case where the polytope is given by a list of linear inequalities, instead of a list of its vertices.  Since there may be an exponential gap between the number of facets and the number of vertices of a polytope (and conversely), it is not clear whether the NP-hardness results presented in this paper extend to this latter case.  This problem, though relevant in several situations, seems to have been less studied in the literature.  To the best of our knowledge, only the cases of continuous-time fixed uncertainty are known, both for the stability \cite{Nemirovskii_interval_matrix_NPhard} and for the stabilizability \cite{blondel1997np}.  Unsurprisingly, the results are also negative in this setting: in these papers, both questions are proved to be NP-hard.

In our opinion, the NP-hardness results presented here should not end the study of these problems, which are of importance in many situations in engineering.  Rather, we think that they should help us to delineate the feasible problems, in order to design the best possible algorithms to be used in practice to analyze or design dynamical systems of the type considered in the paper.

\bibliographystyle{plain}

\end{document}